\newcommand{\qa}{q_{A|\emptyset}}
\newcommand{\qb}{q_{B|\emptyset}}
\newcommand{\qab}{q_{A|B}}
\newcommand{\qba}{q_{B|A}}
\title{Complete Submodularity Characterization in the Comparative Independent Cascade Model}
\author{Wei Chen \inst{1} \and Hanrui Zhang \inst{2}}
\institute{Microsoft Research, Beijing, China \\ \email{weic@microsoft.com} \and Duke University, Durham, NC, USA \\ \email{hrzhang@cs.duke.edu}}
\begin{document}

\maketitle

\begin{abstract}
    We study the propagation of comparative ideas or items in social networks. A full characterization for submodularity in the comparative independent cascade (Com-IC) model of two-idea cascade is given, for competing ideas and complementary ideas respectively, with or without reconsideration. We further introduce One-Shot model where agents show less patience toward ideas, and show that in One-Shot model, only the strongest idea spreads with submodularity.
\end{abstract}

\keywords{influence submodularity, comparative independent cascade, influence maximization}

\section{Introduction}

    Propagation of information in social networks has been extensively studied over the past decades, along with its most prominent algorithmic aspect - influence maximization. The cascade procedure of ideas in a network is usually modeled by a stochastic process, and influence maximization seeks to maximize the expected influence of a certain idea by choosing $k$ agents (the seed set) in the network to be early adopters of the idea. The seed set then initiates the propagation through the network structure.
    
    Influence maximization is proven to be $\mathsf{NP}$-hard \cite{kempe2003maximizing} in almost any non-trivial setting. Most research therefore focuses on approximation algorithms, some particularly successful ones out of which are based on the celebrated $(1 - \frac1e)$-approximate submodular maximization \cite{nemhauser1978analysis}. Submodularity of influence in the seed set therefore plays a central role in such optimization.\footnote{We say a function $f: 2^U \rightarrow \mathbb{R}$ is submodular, if for any $S \subseteq U$, $a, b \in U$, $f(S) + f(S \cup \{a, b\}) \le f(S \cup \{a\}) + f(S \cup \{b\})$.}
    
    Nevertheless, submodularity appears harder to tract when there are multiple ideas interacting with each other. Most prior work focuses on single-idea cascade, or completely competing propagation of ideas. These models somewhat fail in modeling real world behavior of agents. Lu et al.\ \cite{lu2015competition} introduce a general model called {\em comparative independent cascade (Com-IC) model}, 
    which covers the entire spectrum of two item cascades from full competition to full complementarity. 
    This full spectrum is crucially characterized by four probability parameters called {\em global adoption probabilities (GAP)}, and their
	    space is called the GAP space.
    However, they only provide submodularity analysis in a few marginal cases of the entire GAP space, and a full submodularity 
    characterization for the entire GAP space is left
    as an open problem discussed in their conclusion section.
    
    \paragraph{Our contribution.} In this paper, we provide a full characterization of the submodularity of the Com-IC model in both the
    mutually competing case and the mutually complementary case, with or without reconsideration (Theorems~\ref{thm:competing}, \ref{thm:complementary_self_without_reconsideration}, \ref{thm:complementary_self_with_reconsideration} and \ref{thm:complementary_cross}). Our results show that in the entire continuous GAP space, the parameters satisfying submodularity
	    only has measure zero.
    Next, we introduce a slightly modified One-Shot model for the mutual competing case where agents are less patient: they would reject 
	    all items if they get influenced by but fail to adopt any item.
    We provide the full submodularity
    characterization of the parameter space for this model (Theorem~\ref{thm:one-shot}), which contains a nontrivial half space 
    satisfying submodularity, contrasting the result for the Com-IC model.
    Our techniques for establishing these characterization 
	    results may draw separate interests from the technical aspect for the study of submodularity for
	    various influence propagation models.
    
    \paragraph{Related work.} Single-idea models, where there is only one propagating entity for social network users to adopt, has been thoroughly studied. Some examples are the classic Independent Cascade (IC) and Linear Thresholds (LT) models \cite{kempe2003maximizing}. Some other work studies pure competition between ideas. See, e.g.\ \cite{bharathi2007competitive,borodin2010threshold,budak2011limiting,chen2011influence,he2012influence,lu2013bang}. Beside competing settings, Datta et al.\ \cite{datta2010viral} study influence maximization of independently propagating ideas, and Narayanam et al.\ \cite{narayanam2012viral} discuss a perfectly complementary setting, which is extended in \cite{lu2015competition}.
    
\section{The Model}
\label{sec:model}

    We first recapitulate the independent cascade model for comparative ideas (Com-IC).

    First recall that in the classic Independent Cascade (IC) model, the social network is described by a directed graph $G = (V, E, p)$ with probabilities $p: E \rightarrow [0, 1]$ on each edge. Each vertex in $V$ stands for an agent, an edge for a connection, whose strength is characterized by the associated probability. Cascading proceeds at each time step $0, 1, \dots$. At time $0$, only the seed set is active. At time $t$, each vertex $u$ activated at time $t - 1$ tries to activate its neighbor $v$, and succeeds with probability $p(u, v)$. The procedure ends when no new vertices are activated at some time step.

    \paragraph{Basic states and transition.} In comparative IC (Com-IC henceforth) model, there are two ideas, $A$ and $B$, spreading simultaneously in the network, and therefore 9 basic states of each vertex:
    \[
        \{A\textrm{-idle}, A\textrm{-adopted}, A\textrm{-rejected}\} \times \{B\textrm{-idle}, B\textrm{-adopted}, B\textrm{-rejected}\}.
    \]

    Items propagate along the edges in the same way. That is, when some vertex $u$ is activated by $A$, it proposes $A$ to all its neighbors, and the proposal reaches its neighbor $v$ with probability $p(u, v)$. Additionally, when an $A$-proposal reaches an $A$-idle vertex $u$, if $u$ is previously $B$-adopted, it adopts $A$ w.p.\ $\qab$. Otherwise, it adopts $A$ w.p.\ $\qa$. The rules for idea $B$ is totally symmetric. The four probabilities, $\qa, \qb, \qab, \qba$, therefore fully characterize strengths of the two ideas and the relationship between them: when $A$ and $B$ are mutually competing ideas, $\qa \ge \qab$ and $\qb \ge \qba$; when they are mutually complementary ideas, $\qa \le \qab$ and $\qb \le \qba$.

    \paragraph{Reconsideration.} For two complementary items correlated in certain ways, adoptation of one item may result in reconsideration of the other which has been rejected before. This phenomenon is modeled by adding a suspended state and a reconsideration process. For two complementary items $A$ and $B$, suppose $A$ reaches a vertex $u$ first. If $u$ adopts $A$, then everything works in the same way, i.e., the state of $u$ becomes $A\textrm{-adopted} \times B\textrm{-idle}$. It then adopts $B$ w.p.\ $\qba$ and rejects w.p.\ $1 - \qba$. But when $u$ rejects $A$, instead of becoming $A\textrm{-rejected}$, it enters a state called $A\textrm{-suspended}$. When $B$ reaches $u$ later, $u$ adopts $B$ w.p.\ $\qb$ and rejects w.p.\ $1 - \qb$. Moreover, if $u$ adopts $B$, it reconsiders $A$ and adopts w.p.\ $\rho_A$. Only after reconsideration, $u$ becomes $A\textrm{-adopted}$ or $A\textrm{-rejected}$. The rules for $B$ are again symmetric. In Com-IC model, it is further required that the parameters satisfy certain conditions such that at any vertex, it does not matter which item makes its proposal first. Namely, for $\rho_A$,
    \[
        \qa + (1 - \qa) \qb \rho_A = (1 - \qb) \qa + \qb \qab.
    \]
    In the above condition, both sides can be expressed as a probability that a vertex adopts $A$. In the left hand side, item $A$ makes a proposal first, and then $B$ does. The probability of adopting $A$ is therefore the sum of the probability of an instant adoptation upon $A$'s proposal, and the probability of adopting $B$ and a successful reconsideration following. In the right hand side, $B$ makes a proposal first and then $A$ does. The probability of adopting $A$ is then the sum of the probability that $B$ fails and $A$ succeeds, and the probability that both succeed. A similar rule exists for $\rho_B$. As a result, $\rho_A$ and $\rho_B$ are determined by $\qa$, $\qb$, $\qab$ and $\qba$. As we will see, this independence of order greatly simplifies the analysis of the propagation procedure.

    These four probability parameters ($\qa$, $\qb$, $\qab$, $\qba$) are referred to as {global adoption probabilities (GAP)}, and their space as the GAP space.
    
    For tie-breaking, we generate a random ordering of all in-going edges for each vertex, and let proposals which reach at the same time try according to that order. If a vertex adopts two ideas at a same time step, it proposes the two ideas to its neighbors in the order adopted. We refer interested readers to \cite{lu2015competition} for more details of Com-IC model.

    \paragraph{On power of edge probabilities.} Although probabilities on edges seem to make the model more complicated, we note that essentially they do not affect the submodularity of the model. In fact, to show that a group of GAP guarantees submodularity for any network, one may partially realize all randomness on edges, argue submodularity in each realized world, and show submodularity in the original network by taking expectation. As a result, submodularity with edge probabilities is exactly equivalent to that without edge probabilities. In the rest of the paper, we always consider probabilities on edges partially realized, and therefore assume the probability of any edge is $1$.

\section{Notations}

    Let the set of possible worlds (the complete state of the network and vertices after fixing all randomness) be $\mathcal{W}$. For a possible world $W \in \mathcal{W}$, $A$-seed set $S_A$ and $B$-seed set $S_B$ (unless otherwise specified), let $\sigma_A(S_A, S_B, W)$ (resp.\ $\sigma_B(S_A, S_B, W)$) be the number of vertices which adopt $A$ (resp.\ $B$) at the end of cascading in possible world $W$. $\sigma_A(S_A, S_B) = \mathbb{E}[\sigma_A(S_A, S_B, W)]$ (resp.\ $\sigma_B(S_A, S_B) = \mathbb{E}[\sigma_B(S_A, S_B, W)]$) then stands for the expected influence of $A$ (resp.\ $B$) after cascading. Similarly, let $\sigma_A^u(S_A, S_B, W)$ be $1$ if $A$ affects $u$ in $W$, and $0$ if not, and $\sigma_A^u(S_A, S_B) = \mathbb{E}[\sigma_A^u(S_A, S_B, W)]$ the probability that $A$ affects $u$. Parameters are ignored when in clear context.

\section{Submodularity in the Mutually Competing Case}

    \begin{figure}[t]
    \centering
    \begin{tikzpicture}
        \tikzset{vertex/.style = {shape=circle,draw,minimum size=1.5em}}
        \tikzset{edge/.style = {->,> = latex'}}
        \node[vertex] (u) at (0, 0) {$u$};
        \node[vertex] (a) at (2, 0) {$a$};
        \node[vertex] (b) at (4, 0) {$b$};
        \node[vertex] (w) at (2, -2) {$w$};
        \node[vertex] (v) at (4, -1) {$v$};
        \node[vertex] (d1) at (0, -1) {};
        \node[vertex] (d2) at (0, -2) {};
        \node[vertex] (d3) at (0, -3) {};
        \node[vertex] (t) at (2, -4) {$t$};
        \draw[edge] (u) to (d1);
        \draw[edge] (d1) to (d2);
        \draw[edge] (d2) to (d3);
        \draw[edge] (d3) to (t);
        \draw[edge] (a) to (w);
        \draw[edge] (w) to (t);
        \draw[edge] (b) to (v);
        \draw[edge] (v) to (w);
    \end{tikzpicture}
    \caption{Counterexample used in the proofs of Theorem~\ref{thm:competing} and Theorem~\ref{thm:complementary_self_without_reconsideration}.}
    \label{fig:competing_and_self}
    \end{figure}
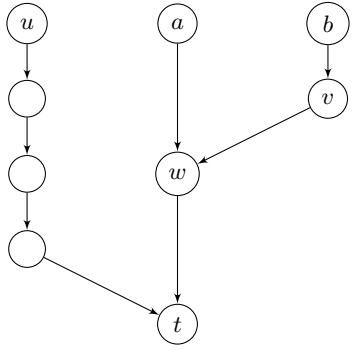

    Recall that when the two ideas are competing, we have $\qa \ge \qab$, $\qb \ge \qba$. We are naturally interested in submodularity of $\sigma_A(S_A, S_B)$ in $S_A$ fixing $S_B$. It turns out that this kind of submodularity is guaranteed only in a 0-measure subset of the parameter space. Formally, we have the following theorem:
    \begin{theorem}[Submodularity Characterization for the Mutually Competing Case]
    \label{thm:competing}
        When the two ideas are mutually competing, for a fixed $S_B$, $\sigma_A$ is submodular in $S_A$ whenever one of the following holds:
        \begin{itemize}
            \item $\qa = 1$,
            \item $\qa = \qab$,
            \item $\qb = \qba$.
        \end{itemize}
        And when none of these conditions hold, submodularity is violated, i.e., there exists $(G, S_A, S_B, u, v)$ such that for each group of $(\qa, \qb, \qab, \qba)$ not satisfying the above conditions,
        \[
            \sigma_A(S_A, S_B) + \sigma_A(S_A \cup \{u, v\}, S_B) > \sigma_A(S_A \cup \{u\}, S_B) + \sigma_A(S_A \cup \{v\}, S_B).
        \]
    \end{theorem}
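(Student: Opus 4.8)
The plan is to split the argument into the easy (sufficiency) direction and the hard (necessity) direction. For sufficiency, I would treat the three cases separately. If $\qa = 1$, then an $A$-proposal reaching an $A$-idle vertex always results in adoption regardless of $B$-status, so the $A$-cascade decouples entirely from $B$ and reduces to a single-source IC process; submodularity then follows from the classical result for IC. If $\qa = \qab$, the adoption decision for $A$ again does not depend on whether the vertex is $B$-adopted, so $A$ propagates as a plain IC process (with a uniform coin $\qa$ at each vertex), independent of $S_B$, and the same classical argument applies. The case $\qb = \qba$ is subtler: here $B$'s adoption does not depend on $A$-status, so the set of $B$-adopted vertices is a fixed random set $R_B$ independent of $S_A$; conditioned on $R_B$, the $A$-cascade is again an IC process on a network where each vertex $u$ carries coin $\qab$ if $u \in R_B$ and $\qa$ otherwise — still submodular in $S_A$ — and submodularity survives taking expectation over $R_B$. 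I would phrase all three via the ``partially realize the randomness, argue per-world, take expectation'' principle already invoked in the On power of edge probabilities paragraph.

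For necessity, the plan is to exhibit the single gadget in Figure~\ref{fig:competing_and_self} and show that for every GAP tuple violating all three conditions it witnesses a strict violation of the pairwise submodularity inequality. I would take $u$ and $v$ as the two candidate new $A$-seeds, and place $S_B$ so that $b$ (or a $B$-seed feeding $v$) sends a $B$-proposal down the $b \to v \to w$ path. The structure is designed so that a $B$-wave arrives at the junction vertex $w$ whose timing or success depends on whether both $u$ and $v$ are seeded: $w$ is reachable from the $A$-side via the long $u$-chain and via $a \to w$, and $v$ feeds $B$ into $w$ along $v \to w$. The effect I want is a ``boost by coincidence'': with only $\{u\}$ or only $\{v\}$ added, vertex $w$ ends up $B$-adopted before $A$ reaches it, so $A$ only gets the weak coin $\qab$ there (and possibly fails to pass $A$ on to $t$), whereas with $\{u, v\}$ added the timing shifts so that $A$ reaches $w$ while it is still $B$-idle, gets the strong coin $\qa$, and then propagates to $t$ — making the joint set strictly better than additive. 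The delay chain $u \to \square \to \square \to \square \to t$ is there precisely to tune arrival times so the coincidence happens only in the joint configuration.

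The main obstacle is making this work \emph{uniformly} over the entire open region $\{\qa < 1,\ \qa > \qab,\ \qb > \qba\}$ with one fixed graph: the violation gap is a polynomial in the four parameters, and I must check it is strictly positive throughout the region rather than merely at a generic point. I expect to compute $\sigma_A(S_A \cup T, S_B)$ in closed form for each of the four choices $T \in \{\emptyset, \{u\}, \{v\}, \{u,v\}\}$ by enumerating the relevant coin outcomes at $w$, $v$, $t$ (the graph is small, so this is a finite case analysis, and the independence-of-order property lets me ignore proposal-ordering subtleties except at the designated tie vertex $w$, whose in-edge ordering I fix deterministically via the tie-breaking rule). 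The resulting expression for the left-minus-right difference should factor as a positive constant times $(\qa - \qab)$ times $(\qb - \qba)$ times $(1 - \qa)$ (times possibly further manifestly positive factors), which is exactly $>0$ under the stated conditions and $=0$ on each boundary piece — matching the theorem. I would close by remarking that the same graph reappears in the proof of Theorem~\ref{thm:complementary_self_without_reconsideration}, so the gadget is doing double duty.
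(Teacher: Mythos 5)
Your sufficiency arguments for two of the three cases have genuine gaps. For $\qa = 1$ you claim the $A$-cascade ``decouples entirely from $B$'': this misreads the model. The coin $\qa$ applies only at vertices that are not $B$-adopted when the $A$-proposal arrives; a $B$-adopted vertex still uses $\qab$, which in the competing case can be strictly below $1$. So with $\qa=1$ the spread of $A$ still interacts with $B$ and does not reduce to single-item IC; the paper has to prove this case with a possible-world path argument (take the $A$-path $p$ reaching $t$ under $S_A \cup \{u,v\}$, and derive a contradiction from the assumption that some vertex of $p$ becomes $B$-adopted before $A$'s arrival once $v$ is dropped, using $\qa = 1$ to show $A$ would then have reached $t$ earlier along another route). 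For $\qb = \qba$, your reduction ``conditioned on $R_B$, the $A$-cascade is IC with coin $\qab$ on $R_B$ and $\qa$ elsewhere'' is not a correct description of the dynamics: the coin used at a vertex is determined by whether it is $B$-adopted at the moment the first $A$-proposal arrives, and $A$'s arrival times depend on $S_A$. A vertex that eventually adopts $B$ but is reached by $A$ first uses $\qa$, not $\qab$. What $\qb = \qba$ really buys is that the entire \emph{timed} trace of $B$'s propagation is independent of $S_A$; one still must handle the timing interaction, which the paper does by induction along the witnessing path, showing $A$ reaches each $w_i$ at the same time step after $v$ is removed and hence faces the same $B$-state there. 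Your static conditioning skips exactly this step. (The $\qa = \qab$ case is fine.)

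For the necessity part your plan is workable and matches the paper: the gadget of Figure~\ref{fig:competing_and_self} with $S_A=\{a\}$, $S_B=\{b\}$ and marginals of $u$ measured at $t$ gives $M_2 - M_1 = \qa^3\qb^2(1-\qa)(\qab-\qa)(\qba-\qb)$, strictly positive on the whole open region $\qa<1$, $\qa>\qab$, $\qb>\qba$, with one fixed graph for all such tuples, as required. However, the mechanism you describe is not the one at work: the long chain from $u$ ends at $t$, not at $w$, and no timing shift is caused by seeding both $u$ and $v$. The actual effect is that, conditioned on $w$ rejecting $A$ (probability $1-\qa$), $u$'s marginal at $t$ is decided by whether $t$ is already $B$-adopted (coin $\qab$) or still $B$-idle (coin $\qa$) when the slow $A$-wave from $u$ arrives, and making $v$ an $A$-seed weakens the $B$-wave at $v$ (coin $\qba$ instead of $\qb$), which is where the factor $(\qba-\qb)$ comes from. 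Your proposed exhaustive enumeration would surface this, but the ``coincidence at $w$'' picture would have you conditioning on the wrong events, so the intuition needs correcting even though the computation plan itself is sound.
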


    \begin{proof}
        First we prove the negative (non-submodular) half of the theorem by given an counterexample, illustrated in Figure~\ref{fig:competing_and_self}. The basic seed sets for $A$ and $B$ are $S_A = \{a\}$ and $S_B = \{b\}$ respectively. In order to show non-submodularity, we consider the marginals of $u$ at $t$ when $v$ is an $A$-seed and when $v$ is not. 
        
        Note that considering submodularity at a single vertex suffices for establishing a global proof, since we could duplicate the vertex such that it dominates the expected influence. Also, we assume $p(u, v) = 1$ for each $(u, v) \in E$, since, as discussed above, all positive (submodularity) proofs can be partially derandomized and done in each partial possible world, and for counterexamples, we simply set the probabilities to be $1$.
        
        Formally, define
        \begin{align*}
            M_1 & = \sigma_A^t(S_A \cup\{u\}, S_B) - \sigma_A^t(S_A, S_B), \\
            M_2 & = \sigma_A^t(S_A \cup \{u, v\}, S_B) - \sigma_A^t(S_A \cup \{v\}, S_B).
        \end{align*}
        Submodularity is violated if we show $M_1 < M_2$. We now calculate $M_1$ and $M_2$ separately. When $v$ is not a seed, $u$ has a marginal at $t$ iff $a$ fails to activate $w$ and idea $A$ succeeds in affecting $t$ from $u$. This is because if $w$ is activated and fails to activate $t$, then $t$ will become $A$-rejected and never accept any $A$-proposal. So $M_1$ is exactly the probability that (1) $a$ does not activate $w$ (with probability $1 - \qa$), (2) $b$ does not activate $t$ and $u$ activates $t$ (with probability $(1 - \qb^3) \qa^4$), or $b$ activates $t$ and $u$ also activates $t$ (with probability $\qb^3 \qa^3 \qab$). That is,
        \[
            M_1 = (1 - \qa)[(1 - \qb^3)\qa^4 + \qb^3 \qa^3 \qab].
        \]
        Similarly, when $v$ is an $A$-seed, $u$ has a marginal if (1) $a$ does not activate $w$ (with probability $1 - \qa$), (2) $b$ does not activate $t$ and $u$ activates $t$ (with probability $(1 - \qb \qba \qb)\qa^4$), or $b$ activates $t$ and $u$ also activates $t$ (with probability $\qb \qba \qb \qa^3 \qab$). We have
        \[
            M_2 = (1 - \qa)[(1 - \qb \qba \qb)\qa^4 + \qb \qba \qb \qa^3 \qab].
        \]
        Taking the difference, we get
        \[
            M_2 - M_1 = \qa^3 \qb^2 (1 - \qa)(\qab - \qa)(\qba - \qb).
        \]
        It is easy to see, when none of the conditions listed in Theorem~\ref{thm:competing} hold, $M_2 - M_1 > 0$, and $\sigma_A$ is not submodular in the seed set of $A$.\footnote{Note that when $A$ and $B$ are competing, $\qab - \qa \ne 0 \Rightarrow \qa \ne 0$, and $\qba - \qb \ne 0 \Rightarrow \qb \ne 0$.}

        We now show case by case, that whenever one of the conditions holds, $\sigma_A$ is submodular in the seed set of $A$.
        \begin{itemize}
            \item
                $\qa = 1$. Consider an equivalent formulation of the model: each vertex $u$ draws two independent numbers uniformly at random from $[0, 1]$, denoted by $\alpha_A(u)$ and $\alpha_B(u)$ respectively. When an $A$-proposal reaches an $(A\textrm{-idle}, B\textrm{-idle})$ or $(A\textrm{-idle}, B\textrm{-rejected})$ vertex $u$, if $\alpha_A(u) \le \qa$, $u$ will accept $A$. When an $A$-proposal reaches an $(A\textrm{-idle}, B\textrm{-adopted})$ vertex $u$, if $\alpha_A(u) \le \qab$, $u$ will accept $A$. The rules for $B$ are symmetric.

                After fixing all randomness, each vertex has two attributes for ideas $A$ and $B$ respectively. That is, each vertex $u$ can be in exactly one state out of
                \begin{align*}
                    & \{\alpha_A(u) \le \qab, \qab < \alpha_A(u) \le \qa, \qa < \alpha_A(u)\} \times \\
                    & \{\alpha_B(u) \le \qba, \qba < \alpha_B(u) \le \qb, \qb < \alpha_B(u)\}.
                \end{align*}

                We show that in any possible world $W$, if $\sigma_A^t(S_A \cup \{u, v\}, S_B, W) = 1$, then $\sigma_A^t(S_A \cup \{u\}, S_B, W) + \sigma_A^t(S_A \cup \{v\}, S_B, W) \ge 1$. That is, if $t$ is reachable by $A$ when $u$ and $v$ are both $A$-seeds, then it is reachable by $A$ when $u$ or $v$ alone is an $A$-seed. Submodularity then follows from monotonicity of $\sigma_A^t(S_A, S_B, W)$ in $S_A$ and convex combination of possible worlds.
                
                Let $p = (w_1, \dots, w_k)$ be the $A$-path which reaches $t$ when $u$ and $v$ are both $A$-seeds, where $w_1$ is an $A$-seed, and $w_k = t$. W.l.o.g.\ $v \notin p$. We argue that for each $w \in p$, if $w$ is not $B$-adopted by the time $A$ arrives when $u$ and $v$ are both $A$-seeds, then $w$ is not $B$-adopted by the time $A$ arrives when only $u$ is an $A$-seed, and as a result, $p$ remains $A$-affected even if $v$ is not an $A$-seed. Suppose not. Let $w$ be the vertex closest to $w_1$ on $p$, which becomes affected by $B$ when $v$ is not a seed, $p'$ be the $B$-path through which $w$ is affected by $B$. Let $x$ be the closest vertex to the $B$-seed on $p'$, which is affected by $A$ at the time the $B$-proposal arrives when $v$ is an $A$-seed, and is affected by $B$ when $v$ is not a seed (such a vertex must exist). Then because $\qa = 1$, the subpath from $x$ to $t$, $[x, w] \subseteq p'$ and $[w, t] \subseteq p$, must be completely $A$-affected when $v$ is an $A$-seed, through which item $A$ reaches $t$ earlier than through $p$, a contradiction.

                Now since each vertex $w \in p$ which is not affected by $B$ when $v$ is an $A$-seed remains not affected when $v$ is not, idea $A$ can pass through the entire path $p$ from some seed vertex to $t$ just like when $v$ is an $A$-seed, so $t$ is still $A$-affected. In other words, w.l.o.g.\ $\sigma_A^t(S_A \cup \{u\}, S_B, W) = 1$.
            \item
                $\qa = \qab$. $B$ does not affect the propagation of $A$. Again the propagation of $A$ is equivalent as an IC procedure, and submodularity follows directly.
            \item
                $\qb = \qba$. We use the possible world model discussed in the first bullet point. Still, let $p = \{w_1, \dots, w_k\}$ be the path through which $t$ is affected by $A$ when both $u$ and $v$ are $A$-seeds, and w.l.o.g.\ $v \notin p$. We apply induction on $i$ to prove that $A$ reaches $w_i$ still at the $(i - 1)$-th time slot when $v$ is not an $A$-seed.

                When $i = 1$, the statement holds evidently as $w_1$ is an $A$-seed. Assume at time $i - 1$, $w_i$ has just been reached by $A$ and become $A$-adopted. Since the propagation of $B$ is not affected by the $A$ seed set or propagation, $w_{i + 1}$ is in the same state w.r.t.\ $B$ as when $v$ is also a seed, so the $A$-proposal to $w_{i + 1}$ from $w_i$ ends up just in the same way, and $w_{i + 1}$ becomes $A$-adopted at time $i$. As a result, $t$ is eventually $A$-adopted, i.e.\ $\sigma_A^t(S_A \cup \{u\}, S_B, W) = 1$.
        \end{itemize}
        \qed
    \end{proof}

\section{Submodularity in the Mutually Complementary Case}

    When the two ideas are complementary, i.e.\ when $\qa \le \qab$ and $\qb \le \qba$, enlarging the seed set of one idea helps the propagation of both the idea itself and that of the other idea. We discuss in this section the self and cross effect of the seed set of an idea, with or without reconsideration.

    \subsection{Self Submodularity}
    
        Fixing $S_B$, we are interested in submodularity of $\sigma_A$ in $S_A$, i.e., submodularity of the influence of some idea w.r.t.\ its own seed set, fixing the seed set of the other idea.

        \begin{theorem}[Self-Submodularity Characterization for the Mutually Complementary Case without Reconsideration]
        \label{thm:complementary_self_without_reconsideration}
            When the two ideas are complementary and no reconsideration is allowed, for a fixed $S_B$, $\sigma_A$ is submodular in $S_A$ whenever one of the following holds:
            \begin{itemize}
                \item $\qa = 0$,
                \item $\qb = 0$,
                \item $\qa = \qab$,
                \item $\qb = \qba$.
            \end{itemize}
            And when none of these conditions hold, submodularity is violated, i.e., there exists $(G, S_A, S_B, u, v)$ such that for each group of $(\qa, \qb, \qab, \qba)$ not satisfying the above conditions,
            \[
                \sigma_A(S_A, S_B) + \sigma_A(S_A \cup \{u, v\}, S_B) > \sigma_A(S_A \cup \{u\}, S_B) + \sigma_A(S_A \cup \{v\}, S_B).
            \]
        \end{theorem}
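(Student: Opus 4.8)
The plan is to reuse the two-part template of the proof of Theorem~\ref{thm:competing}.

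\paragraph{Negative half.} I would take the very same graph of Figure~\ref{fig:competing_and_self} with $S_A=\{a\}$, $S_B=\{b\}$, and again look at the marginal of the $A$-seed $u$ at the sink $t$, with and without $v$ as an extra $A$-seed. The point is that the derivation of $M_1$ and $M_2$ in the competing case never used $\qa\ge\qab$ or $\qb\ge\qba$: it only used that a vertex that rejects $A$ never adopts $A$ afterwards, which still holds since reconsideration is disallowed. Hence the same expressions carry over verbatim and
\[
    M_2-M_1=\qa^3\,\qb^2\,(1-\qa)\,(\qab-\qa)\,(\qba-\qb).
\]
In the complementary regime all five factors are nonnegative, so $M_2\ge M_1$ always, with equality iff some factor vanishes, i.e.\ iff $\qa\in\{0,1\}$, $\qb=0$, $\qab=\qa$, or $\qba=\qb$. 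The only value not literally on the theorem's list is $\qa=1$; but complementarity forces $\qab\ge\qa=1$, hence $\qab=\qa$, so $\qa=1$ already falls under a listed condition. Thus, whenever none of the four conditions holds, every factor is positive, $M_2>M_1$, and submodularity fails; blowing $t$ up into a large gadget and invoking the partial-derandomization remark of Section~\ref{sec:model} turns this into a violation of $\sigma_A$ on a network with edge probabilities.

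\paragraph{Positive half, the easy cases.} If $\qa=\qab$, the $B$-status of a vertex is irrelevant to its adoption of $A$, so the $A$-cascade is an ordinary IC process with vertex probability $\qa$, whose spread is submodular. If $\qb=0$, I claim the $A$-cascade is again an IC process, now with the fixed, $S_A$-independent per-vertex probability $\qab$ at vertices of $S_B$ and $\qa$ elsewhere: a vertex $w\notin S_A\cup S_B$ can never be $B$-adopted at the instant $A$ first reaches it, since being $B$-adopted would require $w$ to have been $A$-adopted at that earlier moment (the $\qb$-channel is dead for non-$B$-seeds), which is impossible before $A$'s first arrival. Hence $w$'s decision always uses threshold $\qa$, vertices of $S_B$ always use $\qab$, and the process is a bona fide (monotone) submodular IC with vertex probabilities.

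\paragraph{Positive half, the case $\qba=\qb$.} Here the $B$-cascade is completely decoupled from $S_A$, so in every possible world $W$ the set of vertices reached by $B$ and their $B$-adoption times are fixed; the $A$-cascade then behaves like an IC process in which a vertex reached by $A$ at time $\tau$ adopts with threshold $\qab$ if $\tau$ is past its (fixed) $B$-time and $\qa$ otherwise. I would argue submodularity per possible world along the lines of the $\qb=\qba$ case of Theorem~\ref{thm:competing}: taking a witnessing $A$-path $p$ to $t$ when $u,v$ are both seeds with $v\notin p$, induct along $p$ to show that deleting $v$ only delays $A$'s arrival at each vertex of $p$, and — using complementarity $\qa\le\qab$ together with consistency of the fixed tie-break orders — that a later arrival only weakly raises the adoption threshold, so $p$ still delivers $A$ to $t$. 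The new wrinkle relative to the competing case is that $\sigma_A$ need no longer be monotone in $S_A$ (an extra $A$-seed can reach a vertex while it is still $B$-idle and make it $A$-reject forever), so $\sigma_A^t(\cdot,S_B,W)$ is a possibly non-monotone $0/1$ function; one must therefore also check the complementary ``forbidden-seed'' direction needed for submodularity of such a function, not just path preservation.

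\paragraph{Main obstacle: $\qa=0$.} I expect this to be the hard case. Now $A$ can only traverse vertices that are already $B$-adopted when the proposal lands, and $\sigma_A$ is genuinely non-monotone in $S_A$. The structural facts I would lean on are: (a) mirroring the $\qb=0$ case, a vertex outside $S_A$ never uses the $\qba$-boost when it adopts $B$ (it is not yet $A$-adopted at $B$'s first arrival), so the $B$-cascade depends on $S_A$ only through the thresholds at the $A$-seeds themselves, and can be coupled monotonically in $S_A$ via per-vertex thresholds; (b) an $A$-seed's own $A$-influence is confined to vertices that are $B$-adopted at the moment its proposal lands. Conditioning on all $B$-randomness and all tie-break orders, the plan is to verify the submodularity inequality for the resulting $0/1$ world function directly, arguing that whenever an extra seed $v$ ``helps'' by $B$-enriching a vertex on an $A$-path to $t$, the $B$-spread it triggers together with $v$'s own $A$-proposals either already routes $A$ to $t$ in the $S_A\cup\{v\}$ world or is irrelevant to it. Making the bookkeeping of which seed is responsible for which $B$-adoption and which early $A$-rejection precise, in the presence of non-monotonicity, is the technical crux.
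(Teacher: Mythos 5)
Your negative half and the two easy positive cases match the paper's proof: same counterexample as Theorem~\ref{thm:competing}, same expression $M_2-M_1=\qa^3\qb^2(1-\qa)(\qab-\qa)(\qba-\qb)$, and the correct observation that $\qa=1$ is absorbed by $\qa=\qab$ under complementarity (this is exactly the paper's footnote); your $\qb=0$ and $\qa=\qab$ reductions to a seed-set-independent IC process are also the paper's route, stated a bit more carefully. The genuine gap is that the two substantive positive cases are left as plans rather than proofs. For $\qa=0$ -- which you yourself call the crux -- the paper has a short concrete argument that your proposal never reaches: in a fixed possible world, take the shortest path $p=(w_1,\dots,w_k=t)$ (shortest also in the tie-breaking sense, i.e.\ the first path through which $A$ could reach $t$) when both $u$ and $v$ are $A$-seeds, with $v\notin p$. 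Because $\qa=0$, each of $w_2,\dots,w_k$ must be $B$-adopted before $A$ arrives; none of these nodes is an $A$-seed (by minimality of $p$), so none can be $A$-adopted when $B$ first reaches it, hence each adopts $B$ by its $\qb$ condition alone, independently of the $A$-seed set; and $B$ reaches $w_2$ no later than $A$ (so $w_2$ is a $B$-seed or the out-neighbor of one, winning the tie-break), again independently of $S_A$. Therefore removing $v$ leaves the $B$-coverage of $p$ intact and $A$ still traverses $p$, giving $\sigma_A^t(S_A\cup\{u\},S_B,W)=1$. The ``bookkeeping'' you defer is exactly this content; as written, the $\qa=0$ case is unproven.

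The $\qb=\qba$ case is also not closed. Your proposed invariant -- ``deleting $v$ only delays $A$'s arrival at each vertex of $p$, and a later arrival only weakly raises the threshold'' -- is not established, and your own non-monotonicity remark shows why it is delicate: removing $v$ can unblock a node that $v$'s early proposal had driven into permanent $A$-rejection, thereby creating $A$-arrivals (possibly at vertices of $p$, before $B$ gets there) that did not exist in the two-seed world, so ``only delays'' is not a safe premise. The paper instead uses that when $\qb=\qba$ the entire $B$-process is independent of the $A$-seed set and of $A$'s propagation, fixes it, and inducts along the activation path exactly as in the last bullet of the proof of Theorem~\ref{thm:competing}, concluding that $A$ reaches $w_i$ at the same time step as before; no delay/threshold comparison is needed. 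Finally, you correctly observe that for a non-monotone $0/1$ world-function the statement ``adopted with both seeds implies adopted with one'' does not by itself yield submodularity, but you leave that ``forbidden-seed direction'' unchecked; flagging the issue is not a substitute for resolving it, so in its current form the proposal does not establish the positive half of the theorem.
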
 

        \begin{proof}
            We first show the negative part. Recall that in the proof of Theorem~\ref{thm:competing}, we calculate that for the graph in Figure~\ref{fig:competing_and_self},
            \[
                M_2 - M_1 = \qa^3 \qb^2 (1 - \qa)(\qab - \qa)(\qba - \qb),
            \]
            which remains exactly the same no matter whether $A$ and $B$ are competing or complementary. If none of the conditions in Theorem~\ref{thm:complementary_self_without_reconsideration} hold, then $M_2 - M_1 > 0$, and $\sigma_A^t$ is not submodular in the seed set of $A$.\footnote{Note that when $A$ and $B$ are complementary, $\qab - \qa \ne 0 \Rightarrow 1 - \qa \ne 0$.}

            Now we prove case by case the positive cases.
            \begin{itemize}
                \item
                    $\qa = 0$. The fact that $\qa = 0$ means that $A$ spreads only by following $B$.
                    We use the same notations as in the proof of Theorem~\ref{thm:competing}. Assume that in possible world $W$, when both $u$ and $v$ are $A$-seeds, $t$ is affected by $A$ (or $\sigma_A^t(S_A \cup \{u, v\}, S_B, W) = 1$), and let $p = \{w_1, \dots, w_k\}$ be the shortest path through which $A$ reaches $t$, where w.l.o.g.\ $v \notin p$. 
                    Note that here by shortest path we mean not only that the length
                    	of path $p$ is the shortest, but also that following the
                    	tie-breaking order of possible world $W$, this is the first
                    	path through which $A$ could reach $t$.

					Consider first that $S_A\cup \{u,v\}$ is the $A$-seed set.
					Since $p$ is the shortest path from any $A$ seed to $t$, 
						there is no other node on path $p$ that is an $A$ seed, 
						and $A$ has to pass through $p$ to reach $t$.
					Moreover, since 
						$A$ cannot propagate by itself and has to rely on the
						help of $B$ adoptions, we know that for all nodes from $w_2$
						on path $p$, $B$ has to arrive at these nodes before
						$A$ does in the possible world $W$, so that the adoptions
						of $B$ on the path help the propagation of $A$ along the path.
					This means that in the possible world $W$, for every node
						$w\in \{ w_2, \ldots, w_k\}$, $w$ adopts $B$ based on its
						$\qb$ condition, independent of $A$.
					Consider $w_2$ now, since $w_2$ is an out-neighbor of the
						$A$-seed $w_1$, then in order for $B$ to reach $w_2$ first,
						either $w_2$ itself is a $B$ seed, or $w_2$ is an
						out-neighbor of a $B$ seed and the tie-breaking order 
						in $W$ is such that $B$ arrives at $w_2$ first.
					We now consider that $S_A \cup \{u\}$ is the $A$-seed set.
					Since $v\not\in p$, we have $w_1 \in S_A \cup \{u\}$.
					By the above argument on $w_2$, we know that at $w_2$
						$B$ still arrives before $A$ does and $w_2$ adopts $B$.
					Then following the path $p$ from $w_2$, we know that all nodes
						on path $p$
						will adopt $B$ independent of $A$, since they
						all adopt $B$ based on their $\qb$ condition alone.
					Therefore, when $A$ arrives at $w_2$ from $w_1$, $w_2$ has already
						adopted $B$, which will help $w_2$ adopt $A$.
					Similarly, when $A$ arrives at $w_j$ ($j\ge 2$) along path $p$,
						$B$ has already arrived at $w_j$ and would help $w_j$ to
						adopt $A$.
					We remark that there is no other way that $A$ could arrive at
						$w_j$ through another path earlier than $B$, since otherwise
						that would either be instead the shortest path for $A$ to reach $t$, or stop $A$ from passing through $p$.
					Therefore, $A$ would still reach $t=w_k$, when
						$S_A \cup \{u\}$ is the $A$-seed set, i.e.
						$\sigma_A^t(S_A \cup \{u\}, S_B, W) = 1$.
                    This is enough to show the submodularity of $\sigma_A$
                    	with respect to $S_A$.
                \item 
                    $\qb = 0$. That is, $B$ spreads only through $A$-adopted vertices, and thus does not affect the propagation of $A$. The equivalent IC cascade procedure gives submodularity directly.
                \item
                    $\qa = \qab$. Again, $B$ does not affect $A$, and submodularity is trivial.
                \item
                    $\qb = \qba$. The proof is totally similar to the last bullet point in the proof of Theorem~\ref{thm:competing}.
            \end{itemize}
        \qed
        \end{proof}
        
        \begin{note}
            The counterexample used in the proof of Theorem~\ref{thm:complementary_self_without_reconsideration} is exactly the same as that used in the proof of Theorem~\ref{thm:competing}. This versatility of the counterexample comes from the factor $(\qa - \qab)(\qb - \qba)$. In each case, $\qa - \qab$ and $\qb - \qba$ are of the same sign.
        \end{note}

        \begin{theorem}[Self-Submodularity Characterization for the Mutually Complementary Case with Reconsideration]
        \label{thm:complementary_self_with_reconsideration}
            When the two ideas are complementary and reconsideration is allowed, for a fixed $S_B$, $\sigma_A$ is submodular in $S_A$ whenever one of the following holds:
            \begin{itemize}
                \item $\qa = \qab$,
                \item $\qb = \qba$,
                \item $\qb = 0$.
            \end{itemize}
            And when none of these conditions hold, submodularity is violated, i.e., there exists $(G, S_A, S_B, u, v)$ such that for each group of $(\qa, \qb, \qab, \qba)$ not satisfying the above conditions,
            \[
                \sigma_A(S_A, S_B) + \sigma_A(S_A \cup \{u, v\}, S_B) > \sigma_A(S_A \cup \{u\}, S_B) + \sigma_A(S_A \cup \{v\}, S_B).
            \]
        \end{theorem}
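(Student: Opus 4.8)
The plan is to follow the two-part template of Theorems~\ref{thm:competing} and \ref{thm:complementary_self_without_reconsideration}: first exhibit a one-parameter family of counterexamples that is non-submodular whenever none of the three conditions holds, and then verify submodularity case by case.

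\emph{Negative direction.} The graph of Figure~\ref{fig:competing_and_self} no longer suffices, because its gap $M_2-M_1 = \qa^3\qb^2(1-\qa)(\qab-\qa)(\qba-\qb)$ vanishes at $\qa=0$, yet $\qa=0$ must now be a non-submodular case (it is absent from the list, unlike in Theorem~\ref{thm:complementary_self_without_reconsideration}), so the source of non-submodularity must be the reconsideration mechanism itself. I would modify the construction so that the sink $t$, when the $a$--$w$ edge fails, becomes $A$-suspended rather than $A$-rejected, and then route a $B$-proposal into $t$ through a small gadget; if $t$ adopts $B$ it revives $A$ by reconsideration with probability $\rho_A$ (which the consistency equation forces to equal $\qab$ when $\qa=0$ and $\qb>0$). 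Arranging the free node $v$ to sit on the $B$-feeding gadget and to control which internal node carries $A$ into $t$, one makes the marginal contribution of $u$ depend multiplicatively on the $B$-adoption coin at $t$; I expect $M_2-M_1$ to come out with a factor $(\qab-\qa)(\qba-\qb)$ times a factor that is positive exactly when $\qb>0$, hence strictly positive iff none of the three conditions holds. Designing this gadget and pushing through the (now suspension- and reconsideration-aware) probability bookkeeping is the main obstacle.

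\emph{Positive direction, cases $\qa=\qab$ and $\qb=\qba$.} When $\qa=\qab$, the adoption probability of $A$ at a vertex is the same whether or not it is $B$-adopted, and the consistency equation gives $\rho_A=0$ unless $\qa=1$ or $\qb=0$ (the former forces $\qab=1$ so that $A$ is never suspended, the latter is the third bullet); hence reconsideration never revives $A$, the $A$-cascade is an ordinary IC process, and its influence is submodular. When $\qb=\qba$, the consistency equation likewise gives $\rho_B=0$ unless $\qb=1$ or $\qa=0$, so the $B$-cascade is entirely independent of $S_A$ and of the state of $A$; arguing as in the last bullet of the proof of Theorem~\ref{thm:competing}, in each possible world one fixes the $B$-cascade together with all adoption and reconsideration coins, takes a shortest $A$-path $p=\{w_1,\dots,w_k\}$ reaching $t$ with $v\notin p$, and shows by induction along $p$ that dropping $v$ from $S_A$ does not change the time at which $A$ first reaches each $w_i$; monotonicity of $\sigma_A^t(\cdot,S_B,W)$ in $S_A$ plus averaging over worlds then gives submodularity of $\sigma_A$. (In the residual sub-cases $\qa=0$ or $\qb=0$ inside these two bullets one checks directly that the $B$-cascade is still $S_A$-independent, whence the $A$-cascade reduces to reachability in an $S_A$-independent transmitter subgraph, which is a coverage function and hence submodular.)

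\emph{Positive direction, case $\qb=0$.} This is the genuinely new argument. With $\qb=0$ a vertex can adopt $B$ only through its $\qba$ rule, i.e.\ only if it is already $A$-adopted, or through reconsideration of $B$ immediately after it adopts $A$ (and the consistency equation gives $\rho_B=\qba$ when $\qa>0$, the case $\qa=0$ being degenerate with no non-seed $B$-adoptions at all). Hence, in any possible world, no vertex is ever $B$-adopted strictly before it becomes $A$-adopted; in particular, at the moment an $A$-proposal first reaches a vertex, that vertex is $B$-idle or $B$-suspended but never $B$-adopted. Therefore every adoption decision for $A$ uses $\qa$ rather than $\qab$, and reconsideration of $A$ is never triggered, so the $A$-cascade coincides with a plain IC cascade on $S_A$ that ignores $B$ entirely, and $\sigma_A$ is submodular in $S_A$. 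I would phrase this via the equivalent possible-world model of the proof of Theorem~\ref{thm:competing}, after checking that simultaneous arrival of $A$ and $B$ at a vertex (resolved by the within-step tie-break, in which $A$'s adoption is processed before the $B$-reconsideration it enables) does not break the invariant that $B$ never precedes $A$.
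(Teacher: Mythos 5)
Your positive direction is essentially correct and follows the paper's route: for $\qa=\qab$ the consistency equation forces $(1-\qa)\qb\rho_A=0$, so (after disposing of the degenerate sub-cases $\qa=1$ and $\qb=0$ as you do) $B$ never helps $A$ and the $A$-cascade is plain IC; for $\qb=\qba$ the $B$-cascade is independent of $A$ and of $S_A$, and your shortest-path induction is an acceptable substitute for the paper's cleaner reduction (let $B$ finish propagating first, then $A$ faces a generalized IC process with vertex-dependent acceptance probabilities); for $\qb=0$ your invariant that no vertex is ever $B$-adopted strictly before being $A$-adopted, so every $A$-decision uses $\qa$ and $A$-reconsideration is never triggered, is exactly the paper's argument spelled out in more detail.

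The negative direction, however, is a genuine gap. You correctly diagnose why the graph of Figure~\ref{fig:competing_and_self} cannot work here (its gap carries a factor $\qa^3$, yet the counterexample must survive $\qa=0$), but you never produce a working construction: you sketch a gadget in which $t$ is revived by $\rho_A$-reconsideration and explicitly defer ``the probability bookkeeping'', which is precisely the content the theorem's negative half requires, since it asserts strict violation for every parameter tuple outside the three conditions. The paper's actual counterexample (Figure~\ref{fig:reconsideration_self}) is much simpler than what you propose: a directed path $b\to v\to a_1\to u\to a_2\to t$ with $S_A=\{a_1,a_2\}$, $S_B=\{b\}$. No reconsideration event occurs in the bad event at all; reconsideration enters only through the order-independence it provides, which lets one schedule proposals conveniently. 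The mechanism is that making $u$ an $A$-seed raises the chance that $B$ squeezes through $u$ (its $B$-coin lies in $(\qb,\qba]$), which lets $B$ reach $a_2$ and $t$, and then $t$ accepts $A$ from $a_2$ only because it is $B$-adopted (coin in $(\qa,\qab]$); adding $v$ as an $A$-seed amplifies this by replacing a factor $\qb$ with $\qba$ at $v$, yielding $M_2-M_1=(1-\qa)(\qba-\qb)^2\qba^2\qb(\qab-\qa)$, which is strictly positive exactly when none of the three conditions holds. Your proposed route, which forces an actual $A$-reconsideration at the sink, is both unfinished and more delicate than necessary, since $\rho_A$ is pinned by the consistency equation and would have to be carried through the whole computation.
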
 

        \begin{proof}
            We prove the negative part first. Consider the counterexample presented in Figure~\ref{fig:reconsideration_self}, and let the basic seed sets of $A$ and $B$ be $S_A = \{a_1, a_2\}$, $S_B = \{b\}$. We consider the marginals of $u$ as a $A$-seed when $v$ is a $A$-seed and when $v$ is not. Let
            \begin{align*}
                M_1 & = \sigma_A^t(S_A \cup \{u\}, S_B) - \sigma_A^t(S_A, S_B), \\
                M_2 & = \sigma_A^t(S_A \cup \{u, v\}, S_B) - \sigma_A^t(S_A \cup \{v\}, S_B).
            \end{align*}

            Note that the order of proposals at a vertex does not affect the final adoptation outcome \cite{lu2015competition}. We can therefore assign the orders of proposals in a way such that the marginals can be easily computed. In particular, $u$ has a non-zero marginal iff the following happen simultaneously in the order as listed:
            \begin{itemize}
                \item $a_1$ does not activate $u$, with probability $1 - \qa$;
                \item A $B$-proposal reaches $u$ and succeeds only when $u$ is already $A$-adopted, with probability $\qb \qba (\qba - \qb)$ when $v$ is not an $A$-seed and $\qba \qba (\qba - \qb)$ when $v$ is;
                \item Upon adoptation of $B$ by $u$, $a_2$ and $t$ will subsequently adopt $B$, with probability $\qba \qb$;
                \item $a_2$ tries to affect $t$ and succeeds only when $t$ is already $B$-adopted, with probability $\qab - \qa$.
            \end{itemize}
            We let the propagation corresponding to the conditions happen in exactly the order listed above. Formally, by multiplying the probabilities of the foregoing independent events, we have
            \begin{align*}
                M_1 & = (1 - \qa) \qb \qba (\qba - \qb) \qba \qb (\qab - \qa), \\
                M_2 & = (1 - \qa) \qba \qba (\qba - \qb) \qba \qb (\qab - \qa).
            \end{align*}
            Taking the difference,
            \[
                M_2 - M_1 = (1 - \qa) (\qba - \qb)^2 \qba^2 \qb (\qab - \qa).
            \]
            It is clear that when no conditions stated in Theorem~\ref{thm:complementary_self_with_reconsideration} hold, $M_2 - M_1 > 0$ and submodularity fails.\footnote{Note that when $A$ and $B$ are complementary, $\qba - \qb \ne 0 \Rightarrow \qba \ne 0$, and $\qab - \qa \ne 0 \Rightarrow \qa \ne 1$.}

            Now we look at the positive cases.
            \begin{itemize}
                \item
                    $\qa = \qab$. That means the propagation of $B$ does not help $A$ at all. Submodularity in this case trivially reduces to that in one-item IC model.
                \item
                    $\qb = \qba$. That means the propagation of $A$ does not affect $B$ at all. We can therefore let $B$ propagate first. When $B$ finishes propagating, the situation $A$ faces is just a generalized IC propagation procedure with possibly different vertex acceptance probabilities. Submodularity follows.
                \item
                    $\qb = 0$. That means $B$ does not propagate without the help of $A$, and therefore can never help $A$. Submodularity again reduces to that in IC model.
            \end{itemize}
        \qed
        \end{proof}

        \begin{figure}[t]
        \centering
        \begin{tikzpicture}
            \tikzset{vertex/.style = {shape=circle,draw,minimum size=2em}}
            \tikzset{edge/.style = {->,> = latex'}}
            \node[vertex] (b) at (0, 0) {$b$};
            \node[vertex] (v) at (2, 0) {$v$};
            \node[vertex] (a1) at (4, 0) {$a_1$};
            \node[vertex] (u) at (6, 0) {$u$};
            \node[vertex] (a2) at (8, 0) {$a_2$};
            \node[vertex] (t) at (10, 0) {$t$};
            \draw[edge] (b) to (v);
            \draw[edge] (v) to (a1);
            \draw[edge] (a1) to (u);
            \draw[edge] (u) to (a2);
            \draw[edge] (a2) to (t);
        \end{tikzpicture}
        \caption{Counterexample used in the proof of Theorem~\ref{thm:complementary_self_with_reconsideration}.}
        \label{fig:reconsideration_self}
        \end{figure}
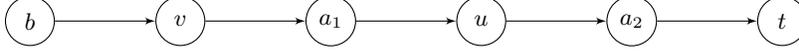

    \subsection{Cross Submodularity}
    
        Fixing $S_A$, because of the complementary nature of the two ideas, we are also curious about submodularity of $\sigma_A$ in $S_B$, i.e., submodularity of the influence of some idea w.r.t.\ the seed set of the other idea, fixing its own seed set.
        The following theorem provides the characterization in this case, for
        both with and without reconsideration.

        \begin{theorem}[Cross-Submodularity Characterization for the Mutually Complementary Case]
        \label{thm:complementary_cross}
            When the two ideas are complementary, no matter whether reconsideration is allowed or not, for a fixed $S_A$, $\sigma_A$ is submodular in $S_B$ whenever one of the following holds:
            \begin{itemize}
                \item $\qa = \qab$,
                \item $\qb = 1$.
            \end{itemize}
            And when none of these conditions hold, submodularity is violated, i.e., there exists $(G, S_A, S_B, u, v)$ such that for each group of $(\qa, \qb, \qab, \qba)$ not satisfying the above conditions,
            \[
                \sigma_A(S_A, S_B) + \sigma_A(S_A, S_B \cup \{u, v\}) > \sigma_A(S_A, S_B \cup \{u\}) + \sigma_A(S_A, S_B \cup \{v\}).
            \]
        \end{theorem}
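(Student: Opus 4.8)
Here is how I would attack Theorem~\ref{thm:complementary_cross}. Throughout I would use the threshold/possible-world reformulation of the model introduced in the proof of Theorem~\ref{thm:competing} — each vertex $w$ draws $\alpha_A(w),\alpha_B(w)$ (and, with reconsideration, a third number for the reconsideration step) uniformly and independently from $[0,1]$ — and I would argue submodularity of $\sigma_A^t$ only at a single target $t$, duplicated so that it dominates the influence.

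\textbf{Positive direction.} If $\qa=\qab$, a vertex's chance of adopting $A$ on an $A$-proposal is independent of its $B$-status, so $A$ propagates exactly as in the classical IC model, oblivious to $S_B$; hence $\sigma_A(S_A,\cdot)$ is a constant function of $S_B$ and is submodular. The substantive case is $\qb=1$. Here complementarity ($\qb\le\qba\le 1$) forces $\qba=1$, so every vertex that receives a $B$-proposal adopts $B$ regardless of its $A$-status; thus in any possible world the $B$-adopted set is exactly the set reachable from $S_B$ in the realized graph, $B$ arrives at each vertex at its BFS distance from $S_B$, and this is wholly independent of the $A$-cascade. I would then show, in every possible world $W$, the single-vertex implication
\[
    \sigma_A^t(S_B\cup\{u,v\},W)=1 \;\Longrightarrow\; \sigma_A^t(S_B\cup\{u\},W)=1 \text{ or } \sigma_A^t(S_B\cup\{v\},W)=1,
\]
which together with monotonicity of $\sigma_A^t$ in $S_B$ (more $B$-seeds only help $A$ once $\qb=1$) yields submodularity in each $W$ and hence, by averaging, in expectation. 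Fix $W$ with $t$ adopting $A$ under $S_B\cup\{u,v\}$, and let $p=(w_1,\dots,w_k=t)$ be the first path (shortest, then first in the tie-breaking order of $W$) along which $A$ reaches $t$, so $A$ occupies $w_i$ at time $i-1$. Let $m$ be the least index with $\alpha_A(w_m)>\qa$ (if none exists, every vertex of $p$ adopts $A$ regardless of $B$ and we are done). For $w_m$ to have adopted $A$ at time $m-1$, $B$ must have reached $w_m$ by time $m-1$, i.e.\ $d_G(S_B\cup\{u,v\},w_m)\le m-1$; since $d_G(S_B\cup\{u,v\},w_m)=\min\{d_G(S_B\cup\{u\},w_m),\,d_G(S_B\cup\{v\},w_m)\}$, without loss of generality $d_G(S_B\cup\{u\},w_m)\le m-1$, and I take $S_B'=S_B\cup\{u\}$. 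Because $B$ also propagates along the edges of $p$, once $B$ sits at $w_m$ by time $m-1$ it sits at $w_i$ by time $i-1$ for every $i\ge m$; so under $S_B'$ every help-needing vertex of $p$ is $B$-adopted before $A$ arrives and adopts $A$ by the $\qab$ rule (so the reconsideration step is never invoked on $p$, and the argument is the same with or without reconsideration), while every $w_i$ with $i<m$ adopts $A$ through its $\qa$-threshold independently of $S_B$. It remains to check that shrinking $S_B$ to $S_B'$ cannot cause some $w_i$ to be burned by a shorter $A$-route before $B$ reaches it: whether a given path delivers $A$ to a given vertex (and when) is monotone in $S_B$, so any shortcut that fails to deliver $A$ under $S_B\cup\{u,v\}$ still fails under $S_B'$, and any shortcut that delivers leads, via the suffix of $p$, to a contradiction with $p$ being the first-realized route to $t$. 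This is precisely the first-path/tie-breaking bookkeeping already carried out in the $\qa=0$ case of Theorem~\ref{thm:complementary_self_without_reconsideration}, and I would follow that argument closely.

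\textbf{Negative direction.} I would exhibit a fixed gadget $(G,S_A,S_B,u,v)$, with $u,v$ candidate $B$-seeds, in which $A$ has a short route from $S_A$ to $t$ that must borrow $B$-adoption at two distinct vertices, one of which can be supplied with $B$ in time only with $v$'s help and the other only with $u$'s help, with these $B$-deliveries arriving along side paths rather than down the $A$-route, and with the base seed $b\in S_B$ placed so that it supplies $B$ to neither in time. Then $u$ and $v$ are jointly necessary for $A$ to reach $t$, so $u$'s marginal is strictly larger when $v$ is already a $B$-seed. Fixing a convenient order of proposals (legitimate since the order does not affect final adoptions \cite{lu2015competition}) and multiplying the probabilities of the resulting independent events, I would compute
\[
    M_1=\sigma_A^t(S_A,S_B\cup\{u\})-\sigma_A^t(S_A,S_B),\qquad M_2=\sigma_A^t(S_A,S_B\cup\{u,v\})-\sigma_A^t(S_A,S_B\cup\{v\}),
\]
and engineer the gadget so that $M_2-M_1$ factors as $(\qab-\qa)(1-\qb)$ times an expression strictly positive throughout the complementary region — in particular carrying no spurious factor of $\qa$ or $\qb$, so that the one gadget refutes submodularity for every parameter tuple with $\qa\ne\qab$ and $\qb\ne 1$, including the extreme points $\qa=0$ and $\qb=0$. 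Finally I would arrange that in every scenario contributing to $M_1$ or $M_2$ no vertex rejects an item and then later receives the other one, so that the reconsideration rule is inert and the computation is identical in both variants of the model.

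\textbf{Expected main obstacle.} The hard part is the positive $\qb=1$ case, and within it the first-path/tie-breaking bookkeeping: making fully rigorous that shrinking $S_B$ cannot speed $A$ up enough to destroy the route $p$, stated uniformly for the models with and without reconsideration. A secondary difficulty is designing the negative gadget so that the slack has exactly the factors $(\qab-\qa)$ and $(1-\qb)$ and nothing else that vanishes on the complementary region — the naive cross-submodularity gadget produces a slack with an extra factor of $\qa$, which is exactly what must be avoided here.
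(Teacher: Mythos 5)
Your positive direction is essentially the paper's argument: $\qa=\qab$ reduces to plain IC, and for $\qb=1$ (hence $\qba=1$ by complementarity) you use that $B$ spreads as pure reachability independent of $A$, pick the first/shortest $A$-route $p$ to $t$, and show the single $B$-seed whose distance realizes the minimum already delivers $B$ to every help-needing vertex of $p$ in time; this is the paper's proof with slightly different bookkeeping (distances versus the explicit $B$-path $p'$). One caveat: in the model with reconsideration your premise that ``$A$ occupies $w_i$ at time $i-1$'' and that $w_m$'s adoption at time $m-1$ forces $B$ to have arrived by then is not automatic, because under $S_B\cup\{u,v\}$ a vertex may adopt $A$ only later, through reconsideration after adopting $B$. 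The paper sidesteps this by invoking order-independence to let $B$ finish propagating first when reconsideration is allowed (after which, with $\qb=1$, reconsideration never fires), and does the timing argument only in the no-reconsideration model; you should do the same rather than claim a single uniform schedule.

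The genuine gap is in the negative direction. You never exhibit the gadget, and the blueprint you commit to cannot meet your own requirement that the slack carry no factor of $\qb$. In your design $u$ and $v$ sit off the $A$-route and supply $B$ ``along side paths''; but a help-needing vertex on the route (one whose marginal contributes the factor $\qab-\qa$) is not a $B$-seed and is not yet $A$-adopted, so for it to be $B$-adopted before $A$ arrives it must pass its own $\qb$ test (with or without reconsideration, $\qba$ cannot substitute, since $\qba$ only applies to already $A$-adopted vertices). Hence every such gadget has $M_1,M_2$ proportional to a positive power of $\qb$, and the counterexample dies exactly at $\qb=0$ with $\qa<\qab$ --- a parameter point the theorem requires you to refute. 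The paper's trick, which your plan explicitly excludes, is to place the added $B$-seeds \emph{on} the $A$-route and indeed to take $u$ to be the target $t$ itself (Figure~\ref{fig:cross}: the path $a\to v\to b\to t$ with $u=t$): seed nodes adopt unconditionally, so seeding $v$ and $t$ with $B$ upgrades $A$'s success there from $\qa$ to $\qab$ without any $\qb$ event, giving $M_2-M_1=(\qab-\qa)^2(1-\qb)\qab$, which is positive on the entire complementary region off the two positive conditions. This seed-auto-adoption-on-path mechanism is not a convenience but the crux --- it is precisely what makes $\qba=1$ insufficient and invalidates Theorem~5 of \cite{lu2015competition} --- so without it your negative half does not go through.
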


        \begin{figure}[t]
        \centering
        \begin{tikzpicture}
            \tikzset{vertex/.style = {shape=circle,draw,minimum size=3em}}
            \tikzset{edge/.style = {->,> = latex'}}
            \node[vertex] (a) at (0, 0) {$a$};
            \node[vertex] (v) at (2, 0) {$v$};
            \node[vertex] (b) at (4, 0) {$b$};
            \node[vertex] (u/t) at (6, 0) {$u$ / $t$};
            \draw[edge] (a) to (v);
            \draw[edge] (v) to (b);
            \draw[edge] (b) to (u/t);
        \end{tikzpicture}
        \caption{Counterexample used in the proof of Theorem~\ref{thm:complementary_cross}.}
        \label{fig:cross}
        \end{figure}
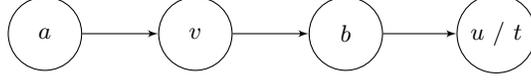

        \begin{proof}
            We prove the negative part first. Consider the counterexample presented in Figure~\ref{fig:cross} (where $u$ and $t$ are different names of the same vertex), and let the basic seed sets of $A$ and $B$ be $S_A = \{a\}$, $S_B = \{b\}$. We consider the marginals of $u$ as a $B$-seed when $v$ is a $B$-seed and when $v$ is not. Let
            \begin{align*}
                M_1 & = \sigma_A^t(S_A, S_B \cup \{u\}) - \sigma_A^t(S_A, S_B), \\
                M_2 & = \sigma_A^t(S_A, S_B \cup \{u, v\}) - \sigma_A^t(S_A, S_B \cup \{v\}).
            \end{align*}

            Node $u$ has a non-zero marginal iff an $A$-proposal reaches $t$ from $a$ and succeeds only when $t$ is $B$-adopted, while $t$ rejects the $B$-proposal from $b$. Since the order of proposals does not matter, w.l.o.g.\ we let $b$ make its proposal first and fail (with probability $1 - \qb$), and then $a$ propagate all the way to $t$ (with probability $\qa \qab (\qab - \qa)$ when $v$ is not a $B$-seed, and $\qab \qab (\qab - \qa)$ when $v$ is). Formally,
            \begin{align*}
                M_1 & = (1 - \qb) \qa \qab (\qab - \qa), \\
                M_2 & = (1 - \qb) \qab \qab (\qab - \qa).
            \end{align*}
            Taking the difference,
            \[
                M_2 - M_1 = (\qab - \qa)^2 (1 - \qb) \qab.
            \]
            It is clear that when none of the conditions stated in Theorem~\ref{thm:complementary_cross} hold, $M_2 - M_1 > 0$ and submodularity fails.\footnote{Note that when $A$ and $B$ are complementary, $\qab - \qa \ne 0 \Rightarrow \qab \ne 0$.}
            We further note that the above example works with or without
            	reconsideration.
            This is because that the reconsideration may only be triggered at
            	node $t$, and only when node $t$ initially does not adopt $B$, and
            	after it adopts $A$, it may reconsider adopting $B$.
            However, we are only looking at the adoption of $A$ at $t$ in $\sigma_A$,
            	and thus reconsideration of adopting $B$ at $t$ has no impact
            	in our analysis above.

            Now we look at the positive cases.
            \begin{itemize}
                \item
                    $\qa = \qab$. That means the propagation of $B$ does not help $A$ at all. Submodularity in this case trivially reduces to the
                    case of the one-item IC model.
                \item
                    $\qb = 1$. That means $B$ can affect any vertex it reaches, and $B$ propagation is indifferent to $A$'s adoption. 
                    We first discuss the case when reconsideration is allowed.
                    In this case, according to~\cite{lu2015competition}, whether
                    $A$ or $B$ arrives at a node first does not matter, and thus 
                    we can always assume that $B$ propagates first in the network,
                    and after $B$'s propagation ends, $A$ starts to propagate.
                    
                    We prove that for any possible world $W$, where $\sigma_A^t(S_A, S_B \cup \{u, v\}, W) = 1$, we have $\sigma_A^t(S_A, S_B \cup \{u\}, W) + \sigma_A^t(S_A, S_B \cup \{v\}, W) \ge 1$. That is, when $t$ is $A$-adopted when both $u$ and $v$ are $B$-seeds, $t$ will still be activated either when $u$ alone is a $B$-seed or $v$ alone is.

                    Let $p = \{w_1, \dots, w_k=t\}$ be the shortest path in the possible world $W$ through which $A$ affects $t$ when $u$ and $v$ are both $B$-seeds. 
                    Let $w$ be the closest vertex to $w_1$ on $p$ that adopts $B$.
                    If no such $w$ exists, then the argument is trivial, since
                    it means $A$ propagates to $t$ by itself, and thus we immediately
                    have $\sigma_A^t(S_A, S_B \cup \{u\}, W) = 
                    	\sigma_A^t(S_A, S_B \cup \{v\}, W) = 1$.
                    So we assume such $w$ exists.
                    Because $\qb=1$, all nodes after $w$ on path $p$ will also
                    adopt $B$, when $S_B\cup\{u,v\}$ is the $B$-seed set.
                    Let $p'$ be the path
                    in the possible world $W$ through which $B$ reaches $w$ from some $B$ seed. 
                    W.l.o.g.\ we assume that $v \notin p'$, and $p'$ starts 
                    	from some $B$-seed $x\in S_B\cup\{u\}$. 
                    We show that $\sigma_A^t(S_A, S_B \cup \{u\}, W) = 1$.
                    This is because in the possible world $W$, starting from $B$-seed
                    	$x\in S_B\cup \{u\}$, $x$ could reach $w$ and then $t$, and
                    	since $\qb=1$, all nodes along this path will adopt $B$.
                    Therefore, when $S_B\cup\{u\}$ is the $B$-seed set,
                    	it is the same that all nodes starting from $w$ on path $p$
                    	will adopt $B$, making it the same as the case
                    	when $S_B\cup\{u,v\}$ is the seed set.
                    Hence, $A$ propagates along the path $p$ in exactly the same way
                    	as if $S_B\cup\{u,v\}$ is the seed set, and thus $t$
                    	will adopt $A$ when $S_B\cup\{u\}$ is the $B$-seed set,
                    	namely, $\sigma_A^t(S_A, S_B \cup \{u\}, W) = 1$.
                    This is sufficient to show the cross-submodularity 
                    	of $\sigma_A$ with respect to $S_B$.
                    	
                    Now we discuss the case without reconsideration.
                    The argument follows the same structure as above.
                    The difference is now the order of item arrival at a node
                    	does matter, so we do not assume $B$ propagates first.
                    Instead, $A$ and $B$ propagate at the same time 
                    	according to the model.
                    On the path $p$, when we define $w$, now $w$ is the first
                    node from $w_1$ that adopts $B$ {\em before $A$ arrives}.
                    That means, for all nodes before $w$ in path $p$, 
                    	even if they adopt $B$, they adopt $B$ after adopting $A$,
                    	and since there is no reconsideration, these nodes
                    	adopt $A$ purely based on their $\qa$ condition, which further
                    	implies that these nodes will adopt $A$ in the possible
                    	world $W$ no matter what
                    	the $B$-seed set is.
                    Therefore, it also means that if no such $w$ exists, then
                    	we trivially have 
                    	$\sigma_A^t(S_A, S_B \cup \{u\}, W) = 
                    	\sigma_A^t(S_A, S_B \cup \{v\}, W) = 1$.
                    For all nodes following $w$ on path $p$, we claim that
                    	 $B$ arrives first before $A$ on these nodes, and thus
                    	 their adoption of $A$ is based on the condition $\qab$.
                    This is because $B$ arrives first at $w$ before $A$, so
                    	if $A$ propagates to the nodes after $w$ along the path $p$,
                    	then $A$ always arrives after $B$ at these nodes.
                    Thus if $A$ arrives first at some node $y$ after $w$, then going
                    	through $y$ there is a shorter path from $A$-seed set to $t$,
                    	contradicting the assumption that $p$ is the shortest
                    	path. 
                    Then, the rest argument follows the same discussion as above,
                    	showing that $w$ and all nodes after $w$ on path $p$
                    	will still adopt $B$ when $S_B\cup\{u\}$ is the $B$-seed
                    	set (w.l.o.g.), and thus $A$ could propagate along the
                    	path $p$ to reach $t$, just as in the case when
                    	$S_B\cup\{u,v\}$ is the $B$-seed set.
            \end{itemize}
        \qed
        \end{proof}

We remark that the result of Theorem~\ref{thm:complementary_cross} invalidates 
	Theorem~5 in~\cite{lu2015competition}, which claims that $\qba=1$ is a sufficient
	condition to guarantee cross-submodularity.
The proof of Theorem~5 in~\cite{lu2015competition} is incorrect, because 
	it does not consider the case that
	$B$ seeds may be on the path from an $A$ seed to a target node $v$, and by the Com-IC
	model a seed node assigned with $B$ will always adopt $B$, disregarding the $\qb$
	and $\qba$ parameters.
This is exactly what happens in the example given in Fig.~\ref{fig:cross}.
Thus, Claim 1 in the proof of Theorem~5 in~\cite{lu2015competition} is incorrect.
However, if the model would require that seed nodes also go through state transitions governed by
	the parameters $\qa, \qb, \qab, \qba$, just like other nodes during the propagation process,
	then Theorem~5 in~\cite{lu2015competition} would be correct.

\section{The One-Shot Model}

    In foregoing sections, properties of a model with somewhat rational agents are discussed. The agents are rational, in a sense that when a first proposal of some idea fails, they still allow the other idea (and sometimes even the first idea) a chance to propose; and when a first proposal succeeds, they do not accept/reject the possible proposal from the other idea instantly. In this section, we look at a model where agents act more extremely.

    \subsection{The Model}
    
        As in the Com-IC model, there is a backbone network $G = (V, E, p)$. 
        The model also has a number of parameters, depending on the number of ideas, as the GAP parameters in Com-IC.
        We only consider the mutually competing case for the One-Shot model.
        The key difference here is that an idle vertex considers only the first proposal that reaches it. When there are $m$ ideas $A_1, \dots, A_m$, each vertex has $m + 2$ possible states: idle, exhausted, $A_1$-adopted, \dots, $A_m$-adopted.
        
        Cascading proceeds in the following fashion: for any $i \in \{1, \dots, m\}$, when an $A_i$ proposal reaches an idle vertex, the vertex adopts $A_i$ w.p.\ $q_i$, and becomes exhausted w.p.\ $1 - q_i$. Once a vertex becomes exhausted, it no longer considers any further proposals. Since all ideas are competing against, an $A_i$-adopted vertex no longer considers proposals of $A_j$ where $j \ne i$. $(q_1, \dots, q_m)$ therefore completely characterizes the strengths of the ideas.

        \paragraph{Notations.} To accommodate numerous ideas, let $S_i$ be the seed set of $A_i$, $\sigma_i(S_1, \dots, S_m, W)$ be the number of vertices which adopt $A_i$ at the end of cascading in possible world $W$, and $\sigma_i(S_1, \dots, S_m)$ be the expectation of $\sigma_i(S_1, \dots, S_m, W)$ over possible worlds, etc.

    \subsection{Submodularity in One-Shot Model}
    
        The characterization of submodularity in One-Shot model appears to be more interesting. It demonstrates a dichotomy over the GAP space of One-Shot model, i.e., only the strongest idea propagates with submodularity.

        \begin{theorem}
        \label{thm:one-shot}
            In One-Shot model, for some $i \in \{1, \dots, m\}$, when $q_i \ge q_j$ for any $j \in \{1, \dots, m\}$ or $q_i = 0$, $\sigma_i$ is submodular in $S_i$; when there is some $j \in \{1, \dots, m\}$ such that $0 < q_i < q_j$, submodularity is violated. To be specific, when $0 < q_i < q_j$, there exists $(G, S_1, \dots, S_m, u, v)$ such that
            \begin{align*}
                & \sigma_i(S_1, \dots, S_m) + \sigma_1(S_1, \dots, S_i \cup \{u, v\}, \dots, S_m) \\
                >\ & \sigma_i(S_1, \dots, S_i \cup \{u\}, \dots, S_m) + \sigma_i(S_1, \dots, S_i \cup \{v\}, \dots, S_m).
            \end{align*}
        \end{theorem}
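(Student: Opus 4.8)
As for the Com--IC results, the plan is to pass to a possible--world formulation. For the One--Shot model each vertex $w$ draws a single uniform $\alpha(w)\in[0,1]$; when the first proposal reaching an idle $w$ is for $A_k$, $w$ adopts $A_k$ if $\alpha(w)\le q_k$ and otherwise becomes exhausted. By the remark on edge probabilities I take all edge probabilities to be $1$, realize the $\alpha$'s and the tie--breaking orders, and prove submodularity worldwise (hence in expectation) and in fact vertexwise (duplicating a target so it dominates the influence). I would split into three regimes: (i) $q_i=0$, where no non--seed ever adopts $A_i$, so $\sigma_i(\,\cdot\,)=|S_i|$ is modular and the claim is immediate; (ii) $q_i\ge q_j$ for all $j$, the substantive positive case; (iii) $0<q_i<q_j$ for some $j$, the negative case.

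\paragraph{The strongest--idea case.} The guiding observation is that, since $q_i=\max_j q_j$, every vertex that would adopt \emph{some} competing idea on first contact ($\alpha(w)\le q_j$) would also adopt $A_i$ on first contact ($\alpha(w)\le q_i$); thus $A_i$ dominates every pairwise race, and whenever $A_i$ reaches a vertex before any competitor it adopts $A_i$ and ``co--opts'' the competitor's would--be onward route. I would prove two facts about $\sigma_i^t(\,\cdot\,,W)$ as a function of $S_i$ (other seed sets fixed): \emph{monotonicity} in $S_i$, and the \emph{diamond/either--or property}: if $t$ adopts $A_i$ with $A_i$--seed set $S_i\cup\{u,v\}$, then it does so already with $S_i\cup\{u\}$ or with $S_i\cup\{v\}$. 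Together these give the vertexwise submodularity inequality (when $\sigma_i^t(S_i\cup\{u,v\},W)=0$ use monotonicity; otherwise use the diamond property plus monotonicity), and summing over $t$ and averaging over $W$ finishes. For the diamond property I would argue as in the proof of Theorem~\ref{thm:competing}: fix the world where $t$ adopts $A_i$ under $S_i\cup\{u,v\}$ and trace back the earliest $A_i$--adoption path $p=(w_1,\dots,w_k=t)$; as there, $w_1$ is the only $A_i$--seed on $p$, so w.l.o.g.\ $v\notin p$ and $w_1\in S_i\cup\{u\}$, so $p$ is a legal $A_i$--path under $S_i\cup\{u\}$; then show by induction along $p$ that $p$ still delivers $A_i$ to $t$, the only danger being that some $w_j$ is reached by a competing idea before $A_i$ arrives along $p$.

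\paragraph{The main obstacle.} Ruling out that last possibility is the technical heart. Deleting the $A_i$--seed $v$ is \emph{not} monotone for competing ideas: a competitor $A_k$ can now propagate through a vertex $z$ that, under $S_i\cup\{u,v\}$, adopted $A_i$ via a route passing through $v$, and this could speed $A_k$'s arrival at some $w_j\in p$. The argument must show that any such re--routing is, at its first ``flipped'' vertex, dominated by an $A_i$--propagation that reaches at least as far and at least as early --- using that $p$ was chosen as the \emph{earliest} path (which also pins down the relevant tie--breaks) and that $A_i$ wins every race. Making this fully rigorous (in particular handling ties correctly) is where the real work lies; I would structure it as an induction on time over the ``first flip'', comparing the two configurations vertex by vertex.

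\paragraph{The negative case.} Here $0<q_i<q_j$; give every idea other than $A_i,A_j$ an empty seed set (so only $A_i,A_j$ matter), take $S_i=\varnothing$, $S_j=\{b\}$. Pick integers $\ell_2=\ell_1+2$ with $\ell_1$ large enough that $q_j^{\ell_1-1}>q_i^{\ell_1-2}$ (possible since $q_j>q_i$, as $\tfrac{\ell_1-2}{\ell_1-1}\to1>\tfrac{\ln q_j}{\ln q_i}$). Put two internally disjoint $b$--to--$t$ paths, route~$1$ of length $\ell_1$ and route~$2$ of length $\ell_2$; attach the candidate seed $u$ to the second vertex of route~$1$ and $v$ to the second vertex of route~$2$. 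Then $A_i$ from $u$ (resp.\ $v$) reaches that vertex at time $1$, one step before $A_j$ from $b$ reaches it, so adding $u$ (resp.\ $v$) always blocks $A_j$ along that route; and conditioned on the remaining internal vertices of the route being $A_i$--good (probability $q_i^{\ell_1-2}$, resp.\ $q_i^{\ell_2-2}$), $A_i$ then co--opts the rest of the route, delivering a proposal to $t$ at time $\ell_1-1$ (resp.\ $\ell_2-1$). A direct computation of the four per--target quantities gives
\[
\sigma_i^t(\varnothing)=0,\qquad
\sigma_i^t(\{u\})=q_i^{\ell_1-1},\qquad
\sigma_i^t(\{v\})=\bigl(1-q_j^{\ell_1-1}\bigr)q_i^{\ell_2-1},
\]
\[
\sigma_i^t(\{u,v\})=q_i\bigl(q_i^{\ell_1-2}+q_i^{\ell_2-2}-q_i^{\ell_1+\ell_2-4}\bigr),
\]
the point for $\sigma_i^t(\{v\})$ being that when only $v$ is added $A_j$ still rushes to $t$ along the shorter, unblocked route~$1$ (time $\ell_1$) and beats $A_i$'s slower co--opt of route~$2$ (time $\ell_1+1$). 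Hence
\[
\sigma_i^t(\varnothing)+\sigma_i^t(\{u,v\})-\sigma_i^t(\{u\})-\sigma_i^t(\{v\})
= q_i^{\ell_2-1}\bigl(q_j^{\ell_1-1}-q_i^{\ell_1-2}\bigr)>0 .
\]
Replacing $t$ by $N$ independent copies (each receiving the last internal vertex of each route) makes this per--target gap dominate $\sigma_i$, giving the stated violation. The structural reason this works only when $q_i<q_j$ is exactly that a single extra $A_i$--seed cannot outpace the stronger $A_j$ along the shorter route, so $u$ and $v$ must cooperate to shut $A_j$ out entirely --- the supermodular jump.
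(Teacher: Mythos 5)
Your negative case is fine. The two-route gadget (routes of lengths $\ell_1$ and $\ell_2=\ell_1+2$ from $b$ to $t$, with $u,v$ attached one step ahead of $A_j$ on each route), the four per-target values, and the resulting gap $q_i^{\ell_2-1}\bigl(q_j^{\ell_1-1}-q_i^{\ell_1-2}\bigr)>0$ for $\ell_1$ large all check out, and duplicating $t$ legitimately converts the per-target violation into one for $\sigma_i$. It is a different gadget from the paper's (there, the stronger idea reaches $t$ along a single shorter route through $v$, and seeding $v$ swaps a strong blocker for a weak one), but both require the path length to be chosen after $q_i,q_j$, exactly as the paper notes, so this is an acceptable alternative.

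The positive case, however, has a genuine gap: the decisive step is announced but never carried out. You set up the same single-threshold possible-world view as the paper, pass to the earliest $A_1$-path $p=(w_1,\dots,w_k=t)$ with $v\notin p$, and correctly isolate the danger that, after removing the seed $v$, a competitor might reach some vertex of $p$ before $A_1$ does --- and then you write that making this rigorous ``is where the real work lies,'' offering only a vague induction over the ``first flip.'' That missing argument is the heart of the theorem. The paper closes it concretely: take $w$, the first vertex of $p$ that is no longer $A_1$-adopted when $v$ is removed; note $w$ must be reachable from $v$ (the flip must trace back to the only change, $v$); take a shortest path $p'$ from $v$ to $w$ and the first vertex $x$ on $p'$ that flips to the competitor; since $v$ as an $A_1$-seed blocks the competitor along the initial segment of $p'$, and since every participating (non-type-$0$) vertex accepts $A_1$ because $q_1$ is maximal, $p'$ is itself an $A_1$-path which, with $v$ seeded, delivers $A_1$ to $w$ strictly earlier than the prefix $[w_1,w]$ of $p$ does --- contradicting the choice of $p$ as the earliest path. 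Your sketch names the ingredients (earliest path, ``$A_1$ wins every race,'' first flipped vertex) but never assembles them into this contradiction, and the assembly is not automatic: one must justify why the flip is traceable to $v$ and why the arrival-time comparison (including tie-breaking) goes through. The same remark applies to worldwise monotonicity of $\sigma_1^t$ in $S_1$, which you rightly identify as needed for the $0$--$1$ submodularity inequality but also leave unproved (the paper is silent on it too; a variant of the same earliest-path argument supplies it). As written, the positive half is a proof plan rather than a proof.
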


        \begin{figure}[t]
        \centering
        \begin{tikzpicture}
            \tikzset{vertex/.style = {shape=circle,draw,minimum size=3em}}
            \tikzset{edge/.style = {->,> = latex'}}
            \node[vertex] (u) at (0, 0) {$u$};
            \node[vertex] (j) at (4, 0) {$j$};
            \node[vertex] (v) at (4, -1.5) {$v$};
            \node[vertex] (x1) at (0, -1.5) {$x_1$};
            \node (x2) at (0, -3) {};
            \node (dx) at (0, -3.75) {\dots};
            \node (xk+1) at (0, -4.5) {};
            \node[vertex] (xk+2) at (0, -6) {$x_{k + 2}$};
            \node[vertex] (y1) at (4, -3) {$y_1$};
            \node (y2) at (4, -4.5) {};
            \node (dy) at (4, -5.25) {\dots};
            \node (yk-1) at (4, -6) {};
            \node[vertex] (yk) at (4, -7.5) {$y_k$};
            \node[vertex] (t) at (2, -9) {$t$};
            \draw[edge] (u) to (x1);
            \draw[edge] (x1) to (x2);
            \draw[edge] (xk+1) to (xk+2);
            \draw[edge] (xk+2) to (t);
            \draw[edge] (j) to (v);
            \draw[edge] (v) to (y1);
            \draw[edge] (y1) to (y2);
            \draw[edge] (yk-1) to (yk);
            \draw[edge] (yk) to (t);
        \end{tikzpicture}
        \caption{Counterexample used in the proof of Theorem~\ref{thm:one-shot}.}
        \label{fig:one-shot}
        \end{figure}
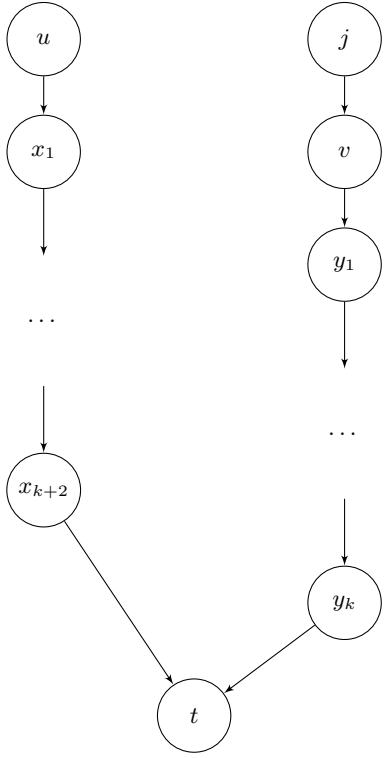

        \begin{proof}
            We prove the negative part first. Let $j$ be an item where $q_j > q_i$. Consider the network shown in Figure~\ref{fig:one-shot}, where the basic seed sets are $S_j = \{j\}$ and $S_k = \emptyset$ for any $k \ne j$. We calculate the marginals of $u$ at $t$ when $v$ is an $A_i$-seed and when $v$ is not. Formally, let
            \begin{align*}
                M_1 & = \sigma_i(S_1, \dots, S_i \cup \{u\}, \dots, S_m) - \sigma_i(S_1, \dots, S_m), \\
                M_2 & = \sigma_i(S_1, \dots, S_i \cup \{u, v\}, \dots, S_m) - \sigma_i(S_1, \dots, S_i \cup \{v\}, \dots, S_m).
            \end{align*}

            When $v$ is not a seed, $u$ has a positive marginal iff $j$ fails to reach $t$ and $u$ successfully reaches $t$. That is,
            \[
                M_1 = q_i^{k + 3} (1 - q_j^{k + 2}).
            \]
            And when $v$ is an $A_i$-seed, $t$ has a positive marginal iff $v$ fails to reach $t$ and $u$ succeeds. So,
            \[
                M_2 = q_i^{k + 3} (1 - q_i^{k + 1}).
            \]
            Taking the difference,
            \[
                M_2 - M_1 = q_i^{k + 3}(q_j^{k + 2} - q_i^{k + 1}).
            \]
            As $q_i < q_j$,
            \[
                \lim_{k \rightarrow \infty} \frac{q_j^{k + 2}}{q_i^{k + 1}} = \infty,
            \]
            so when $q_i > 0$, there is some $k$ such that $M_2 - M_1 > 0$, and submodularity is violated.

            We prove the positive part now. When $q_i = 0$, $\sigma_i = |S_i|$ is clearly submodular in $S_i$. Now we consider the other case. W.l.o.g.\ we renumber the items such that $q_1 \ge q_2 \ge \dots \ge q_m$, and show that $\sigma_1$ is submodular in $S_1$. We take a different possible world view here. Since each vertex considers only one proposal, it needs at most one random real number drawn uniformly at random from $[0, 1]$. When a $A_i$-proposal reaches a vertex $u$, $u$ accepts the proposal iff its random real number, denoted by $X_u$, does not exceed $q_i$, so effectively $u$ accepts a $A_i$-proposal w.p.\ $q_i$. Note that once $X_u$ is fixed, if $u$ accepts a $A_i$-proposal, it also accepts a $A_{i - 1}$-proposal given that it arrives first, because $X_u \le q_i \le q_{i - 1}$. Equivalently we may say that with probability $q_i - q_{i + 1}$ (where $q_0 = 1$ and $q_{m + 1} = 0$), $u$ accepts exactly the strongest $i$ proposals if they arrive first. We call these vertices type $i$ vertices. Each vertex belongs to exactly one of types $0$ through $m$.

            Consider a possible world interpretation where each possible world consists of the types of all vertices. We argue that in any possible world $W$, for any vertex $t$, $\sigma_i^t(S_1, \dots, S_m, W)$ is submodular in $S_1$, fixing $S_2$, \dots, $S_m$. To be specific, for any $S$, $u$, $v$, we show that if $t$ adopts $A_1$ when $S_1 = S \cup \{u, v\}$, then it must also adopt $A_1$ either when $S_1 = S \cup \{u\}$ or when $S_1 = S \cup \{v\}$. Remove all type $0$ vertices first, since they do not participate in the propagation. When $S_1 = S \cup \{u, v\}$, let $p = \{w_1, \dots, w_k\}$ be the shortest path through which $t$ is affected by $A_1$ , where $w_1 \in S \cup \{u, v\}$ and $w_k = t$. W.l.o.g.\ assume that $w_1 \ne v$. We show that $\sigma_1^t(S \cup \{u\}, \dots, S_m, W) = 1$. Assume the opposite, which implies that at least one vertex in $p$ is not $A_1$-adopted when $v$ is not a seed. Let $w$ be the vertex closest to $w_1$ on $p$, which becomes not $A_1$-adopted (and $A_i$-adopted instead) when $v$ is not a seed. $w$ must be reachable from $v$. Let $p' = \{x_1, \dots, x_l\}$ be the shortest path from $v$ to $w$, and $x$ the closest vertex to $v$ on $p'$ which becomes $A_i$-adopted when $v$ is not a seed. Since $v$ blocks $A_i$ from affecting $x$ through path $[x_1, x] \subseteq p'$, and when $v$ is not a seed, $x$ blocks $w$ from being affected by $A_1$ through path $[x, x_l] \subseteq p'$, clearly $p'$ is a shorter $A_1$-path (recall that $A_1$ can pass through every vertex in the world) from the $A_1$ seed set to $w$ than $[w_1, w] \subseteq p$ when $v$ is an $A_1$-seed, a contradiction.
        \qed
        \end{proof}
        
        \begin{note}
            Unlike all other theorems, the counterexample needed for Theorem~\ref{thm:one-shot} has to be constructed after fixing $q_i$ and $q_j$.
        \end{note}
        
\section*{Acknowledgment}

We would like to thank Yingru Li for some early discussions on the subject.
Wei Chen is partially supported by 
  the National Natural Science Foundation of China (Grant No. 61433014).
Hanrui Zhang is supported by NSF Award IIS-1527434.

\bibliographystyle{splncs03}
\bibliography{biblio}

\end{document}